
\documentclass[letterpaper, 10 pt, conference]{ieeeconf}  

\IEEEoverridecommandlockouts                              

\overrideIEEEmargins         
\usepackage{url}
\usepackage{cite}
\usepackage{amsmath,amssymb,amsfonts}
\usepackage{algorithmic}
\usepackage{graphicx}
\usepackage{subcaption}
\usepackage{textcomp}
\usepackage{xcolor}
\def\BibTeX{{\rm B\kern-.05em{\sc i\kern-.025em b}\kern-.08em
    T\kern-.1667em\lower.7ex\hbox{E}\kern-.125emX}}

\newtheorem{theorem}{Theorem}

\newtheorem{lemma}{Lemma}
\newtheorem{remark}{Remark}

\usepackage{color}

\usepackage{algorithm}
\usepackage{algorithmic}

\title{\LARGE \bf
    Stability-Preserving Model Reduction of Networked Lur'e Systems
}

\author{Yangming Dou, Xiaodong Cheng, and Jacquelien M. A. Scherpen
\thanks{Y. Dou and J.M.A Scherpen are with Jan C. Willems Center for Systems and Control, ENTEG, Faculty of Science and Engineering, University of Groningen, 9747 AG Groningen, The Netherlands 
{\tt\small \{y.dou, j.m.a.scherpen\}@rug.nl}}%
\thanks{X. Cheng is with Mathematical and Statistical Methods Group (Biometris), Department of Plant Science, Wageningen University \& Research, 6700 AA Wageningen, The Netherlands.
{\tt\small xiaodong.cheng@wur.nl}}%
}

\begin{document}
 
\maketitle
\thispagestyle{empty}
\pagestyle{empty}

\begin{abstract}
    This paper proposes a model reduction approach for simplifying the interconnection topology of Lur'e network systems. A class of reduced-order models are generated by the projection framework based on graph clustering, which not only preserve the network structure but also ensure absolute stability. Furthermore, we provide an upper bound on the input-output approximation error between the original and reduced-order Lur'e network systems, which is expressed as a function of the characteristic matrix of graph clustering. Finally, the results are illustrated via a numerical example.
\end{abstract}

\begin{keywords}
    Lur'e system, model reduction, absolute stability, graph clustering
\end{keywords}

\section{Introduction}
Lur'e systems represent an important class of nonlinear systems that consist of linear dynamics with a static nonlinear feedback.
Many physical nonlinear systems, such as mechanical systems, power systems, and hyperchaotic attractors can be modeled as Lur'e systems, see some examples in \cite{Besselink2009Lure,lee2010synchronization}. 
The interconnection of  Lur'e systems  gives rise to  networked Lur'e systems, which gain much attention from the literature, see e.g.,\cite{suykens1999robust,murray2007recent,zhang2015distributed}. 

From the perspective of design and optimization, dealing with large-scale models of dynamic systems can often be challenging. Therefore, model reduction serves as an indispensable tool for generating lower-order approximations that facilitate efficient design and optimization processes.
Various methods have been developed for the model reduction of Lur'e systems.  \cite{zhang2015dissipativity}  introduces the dissipativity-based model reduction for Markov jump Lur’e systems using linear matrix inequalities.\cite{Besselink2009Lure} proposes a balanced truncation approach for Lur'e systems, which preserves the absolute stability of the reduced-order model, while  \cite{ChengEJC201911} presents a generalized balanced truncation approach  for Lur'e networks, focusing on preserving the synchronization property of the networks.  
In the context of Lur'e network systems, how to reduce the number of interconnected subsystems (or nodes) is also a crucial research problem.  In \cite{deng2014structure}, a method based on Kullback-Leibler divergence is applied to simplify the interconnected structure of Lur'e networks. However, the model reduction error is difficult to characterize since only the linear part is considered for the divergence analysis, disregarding the nonlinear component of the system.

In general, reducing a system containing nonlinear elements is challenging, especially when dealing with nonlinear systems with network structure. Nevertheless, insights can be drawn from model reduction techniques employed in linear network systems. In such systems,   a mainstream method for reducing complicated network structures is graph clustering, which is realized by dividing the subsystems into several disjoint clusters \cite{Cheng2021AR,monshizadeh2014clusterting}. Different from  conventional model reduction methods such as balanced truncation \cite{antoulas2005approximation} or moment matching \cite{astolfi2010model}, the reduced order system obtained by clustering can still be represented as a network system with fewer number of nodes than its full-order counter part. How to find suitable clusters becomes the key in this kind of methods. \cite{besselink2015clustering} proposes a clustering-based model reduction method for networked passive systems by analyzing the controllability and observability properties
of associated edge systems. \cite{ishizaki2013model} introduces the notion of clustering reducibility, which is related to the approximation error. The works in \cite{ChengTAC20172OROM,ChengTAC2018MAS,Cheng2019digraph} present a dissimilarity-based clustering approach for both undirected and directed network systems, where the dissimilarity between two nodes can be featured in the difference of node behaviors with respect to external inputs. In \cite{martin2018large}, clustering method is generalized to scale-free networks, where the reduced system is obtained by minimizing the scale-free cost function.
However, compared to the reduction of linear network systems, model reduction for nonlinear network systems is remain relatively underdeveloped.

In this paper, we propose a clustering-based model reduction method for reducing the network structure of the Lur'e network, meanwhile preserving the absolute stability. The main framework is based on the clustering-based project using the characteristic matrix of a clustering. Different from \cite{deng2014structure}, we consider both the linear and nonlinear parts of Lur'e networks in the framework, i,e, both parts will determine the upper hound of the approximation error. Moreover, the clustering-based projection framework guarantees the preservation of the absolute stability in reduced-order Lur'e networks. In this paper, 
an explicit expression of the upper bound on input-to-output error between the original and reduced-order network systems is provided, where the bound can be calculated via a linear system parameterized by the characteristic matrix of a graph clustering. 

The remainder of this paper is structured as follows. The preliminaries and problem setting are introduced in Section~\ref{sec:prelim}. Section \ref{sec:main} then provides the clustering-based model reduction method for Lur'e networks and analyze the bound on the reduction error. In Section \ref{sec:simulation}, an example is shown to illustrate the results. Finally, in Section \ref{sec:conclusion}, the conclusion remarks and potential future works are given.
 
 \textit{Notation:} The symbol $\mathbb{R}$ denotes the set of real numbers, and $\mathbb{R}_+$ denotes the set of nonnegative real numbers. $I_n$ represents the identity matrix of size $n$. $\mathbf{1}$ represents a vector with all the elements equal to $1$. 
$\|x(t)\|_2$ denotes the $\mathcal{L}_2$-norm of a signal $x(t)$, and $\|G(s)\|_{\mathcal{H}_\infty}$ denotes the $\mathcal{H}_{\infty}$-norm of the transfer function $G(s)$ of a linear time-invariant system.
\section{Preliminaries and Problem Setting}
\label{sec:prelim}


In this paper, we consider a Lur'e network system composed of $N$ subsystems formulated as follows:
\begin{equation}\label{SYS scalar}
    \Sigma_i:\begin{aligned}
        &\Dot{x}_i=-a_ix_i+u_i-\phi(x_i),
        ~i=1,2,\cdots, N,
    \end{aligned}
\end{equation}
where $x_i\in \mathbb{R}$ and $u_i\in \mathbb{R}$ are the state and the input of the $i$th subsystem $\Sigma_i$. $v_i\in \mathbb{R}$ is the feedback. $\phi(x_i) \in \mathbb{R}\mapsto \mathbb{R}$ is a continuous nonlinear function. Throughout the paper, we 
assume that the nonlinearity $\phi(x_i)$ in each subsystem $i$ is \textit{slope-restricted} as
	\begin{equation} \label{assu:eqsecbo}
	0 \leq \dfrac{\phi(x_a) - \phi(x_b)}{x_a - x_b} \leq \mu_i,
	\end{equation}
 for all $x_a, x_b \in \mathbb{R}$ and $x_a \ne x_b$, where $\phi(0) = 0$, $\mu_i > 0$ is known.

Define the input to each subsystem as
\begin{equation}
 u_i=\Sigma^N_{j=1}w_{ij}(x_j-x_i)+b_{ik}u_{ei},   
\end{equation}
where $w_{ij}$ is the weight of the edge between the two nodes $i$ and $j$, and $u_{ek}$ is the external input with the gain $b_{ik} \in \mathbb{R}$, which is $0$ if $k$-th input does not affect node $i$. Thus the compact networked system  can be expressed as

\begin{equation}
\label{lure net}
    \Sigma:\begin{aligned}
        &\Dot{x}=A_Lx+Bu_e-\Phi(x)\\
    \end{aligned}
\end{equation}
where $x=[x_1, x_2, \cdots, x_N]^T$, $u_e=[u_{e1}, u_{e2}, \cdots, u_{ep}]^T$, and $\Phi(x)=[\phi(x_1), \phi(x_2), \cdots, \phi(x_N)]^T$,  with $p$ the number of external inputs, and  $B \in \mathbb{R}^{N \times p}$ is a matrix with $b_{ik}$ as its $(i,k)$-th entry. 
\begin{align}\label{AL}
    A_L : = -A - L,
\end{align}
where
$A=\text{diag}(a_1, a_2, \cdots ,a_N)$, and $L$ is the Laplace matrix of graph $\mathcal{G}$, and its each element is defined as
\begin{equation}
L_{ij}=\left\{\begin{aligned}
 &-w_{ij}, ~~~~~~~~i\neq j\\
   & \sum^N_{j=1,j\neq i}w_{ij},~~\text{otherwise}.
\end{aligned}\right.   
\end{equation}
Clearly, $A_L$ is symmetric and negative-definite.


This paper aims for structure-preserving model reduction for diffusively coupled Lur'e networks in the form of \eqref{lure net}, and the reduced-order model not only approximates the input-output behavior of the original network system with a certain accuracy but also inherits an interconnection structure with diffusive couplings. 
Specifically, the problem addressed in this paper is as follows.

Given a networked Lur'e system \eqref{lure net}, our objective is to derive a simplified model described by:
\begin{equation}
\label{lurerednet}
\Hat{\Sigma}: 
 \Dot{z}=\Hat{A}_Lz+\Hat{B}u_e- \hat\Phi(\Hat{x}), \quad
 \Hat{x}=\Pi z,
\end{equation}
where $z\in \mathbb{R}^r$ denotes the state of the reduced-order system, and $\hat{x}\in \mathbb{R}^N$ represents an approximation of $x$. The matrix $\Hat{A}_L$ can be composed of a diagonal matrix $\hat{A}$ and a Laplacian matrix $\hat{L}$  characterizing a reduced undirected graph. Additionally, the reduction error $\lVert x(t) - \hat{x}(t) \rVert_{2}$ remains sufficiently small relative to the external input $\lVert u(t) \rVert_{2}$.

\section{Main Results}
\label{sec:main}
This section presents the main results of this paper, where we first show the absolute stability can be preserved by the proposed method and then analyze the upper bound of the approximation error. 

\subsection{Absolute Stability}

To guarantee the approximation error $\| x(t) - \hat{x}(t)\|_2$ to be up-bounded w.r.t. the external inputs $u(t)$, we need to study the condition under which both the original network system \eqref{lure net} and the reduced-order system \eqref{lurerednet} are stable. Specifically, we consider the concept of absolute stability, which essentially means that the origin of a Lur'e system is globally uniformly asymptotically stable for any nonlinearity in the given sector.

Note that if each subsystem has a slope-restricted nonlinearity as in \eqref{assu:eqsecbo}, then it is not hard to show that the nonlinearity $\Phi(x)$ in the Lur'e network system will satisfy the
incrementally sector-bounded condition described as: 
\begin{equation}\label{incrementally sector bounded}
 [\Phi(x_a)-\Phi(x_b)]^T[\Phi(x_a)-\Phi(x_b)-K_\mu (x_a-x_b)]\leq 0   
\end{equation}
for all $x_a,~x_b \in \mathbb{R}^N$, $x_a\neq x_b$
and $\Phi(0)=0$, where 
$$K_\mu=\text{diag}(\mu_1, \mu_2,\cdots, \mu_N).$$


Based on that we provide a sufficient condition for the absolute stability of the Lur'e network system \eqref{lure net}.
\begin{lemma}
    The unforced original system is absolutely stable if there exist a positive definite symmetric matrix $P$, a symmetric matrix $W$ and a positive constant $\epsilon$ such that
    \begin{subequations}
    \label{eq:abs}
            \begin{align}\label{eq: abs1}
        PA_L+A^T_LP&=-W^TW-\epsilon P\\
        P&=K_\mu-\sqrt{2}W. \label{eq: abs2}
    \end{align}
    \end{subequations} 
\end{lemma}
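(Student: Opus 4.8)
The plan is to certify global asymptotic stability through the quadratic Lyapunov candidate $V(x) = x^T P x$, which is positive definite and radially unbounded because $P$ is symmetric positive definite. First I would differentiate $V$ along trajectories of the unforced system $\dot x = A_L x - \Phi(x)$. Using the symmetry of $P$ to merge the two cross terms, this gives $\dot V = x^T(A_L^T P + P A_L)x - 2\Phi(x)^T P x$. Substituting the Lyapunov-type equation \eqref{eq: abs1} replaces the quadratic-in-$x$ part, so that $\dot V = -\epsilon V - \|Wx\|^2 - 2\Phi(x)^T P x$, where I have also written $x^T P x = V$.

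Next I would inject the sector information into the remaining cross term. Taking $x_a = x$ and $x_b = 0$ in the incrementally sector-bounded condition \eqref{incrementally sector bounded} and using $\Phi(0)=0$ yields $\|\Phi(x)\|^2 \le \Phi(x)^T K_\mu x$. I would then eliminate $P$ from the cross term via \eqref{eq: abs2}, writing $-2\Phi(x)^T P x = -2\Phi(x)^T K_\mu x + 2\sqrt 2\,\Phi(x)^T W x$, and bound the first piece by $-2\Phi(x)^T K_\mu x \le -2\|\Phi(x)\|^2$ using the sector inequality just obtained. Collecting everything leaves the estimate $\dot V \le -\epsilon V - \|Wx\|^2 + 2\sqrt 2\,\Phi(x)^T W x - 2\|\Phi(x)\|^2$.

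The crux is the final assembly. The surviving sign-indefinite terms $-\|Wx\|^2 + 2\sqrt 2\,\Phi(x)^T W x - 2\|\Phi(x)\|^2$ are precisely a perfect square: exploiting the symmetry of $W$, they collapse into $-\|Wx - \sqrt 2\,\Phi(x)\|^2 \le 0$, which is exactly what the factor $\sqrt 2$ in \eqref{eq: abs2} is engineered to produce. Hence $\dot V \le -\epsilon V$, and since $V$ is radially unbounded this forces $V(t) \le V(0)e^{-\epsilon t}$, i.e. global exponential, and thus global uniform asymptotic, stability of the origin. Because every inequality used only the sector bound \eqref{incrementally sector bounded}, the conclusion holds uniformly over all admissible nonlinearities, which is the definition of absolute stability.

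I expect the only genuinely delicate point to be recognizing that conditions \eqref{eq: abs1} and \eqref{eq: abs2} are jointly designed so that the indefinite cross terms complete a square, rather than having to verify any isolated algebraic identity; once the square $-\|Wx - \sqrt 2\,\Phi(x)\|^2$ is spotted the remainder is routine. A minor bookkeeping point to keep clean is that for stability of the unforced system the sector condition is only invoked at the pair $(x,0)$, so the full incremental form is not actually needed here.
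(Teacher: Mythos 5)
Your proof is correct, but it takes a different (and more self-contained) route than the paper. The paper's proof is essentially a citation: it observes that the unforced system is in Lur'e form with loop transfer function $Z(s)=I_N+K_\mu(sI_N-A_L)^{-1}$, invokes the multivariable circle criterion to reduce absolute stability to strict positive realness of $Z(s)$, and then invokes the Kalman--Yakubovich--Popov lemma to identify \eqref{eq: abs1}--\eqref{eq: abs2} as exactly the SPR conditions for this $Z(s)$ (with $B=C^T$ replaced by $K_\mu$ and $D+D^T = 2I_N$ producing the factor $\sqrt 2$). What you have done is unpack the Lyapunov argument that sits inside the proof of that circle criterion, specialized to this realization: the candidate $V=x^TPx$, the substitution of \eqref{eq: abs1}, the sector inequality $\|\Phi(x)\|^2\le\Phi(x)^TK_\mu x$ obtained from \eqref{incrementally sector bounded} at the pair $(x,0)$, and the completion of the square $-\|Wx-\sqrt2\,\Phi(x)\|^2$ are precisely the steps hidden in the cited lemmas. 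Your version buys self-containedness, avoids any appeal to the controllability/observability of $(A_L,I_N)$ and $(A_L,K_\mu)$ that the paper needs in order to apply the KYP equivalence, and yields the slightly stronger conclusion $V(t)\le V(0)e^{-\epsilon t}$ (global exponential decay of $V$); the paper's version buys brevity and makes transparent the frequency-domain interpretation of \eqref{eq:abs} that is reused in Theorem~1. Your closing remark that only the sector condition at $(x,0)$ is needed is accurate. One cosmetic point: it is the comparison lemma applied to $\dot V\le-\epsilon V$, not radial unboundedness, that gives the exponential estimate; radial unboundedness is only needed to make the asymptotic stability global.
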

\begin{proof}
   Let the external input $u_e=0$, then the Lur'e network system becomes
\begin{equation}
    \label{eq unforced Lure}
         \Dot{x}=A_Lx-\Phi(y), \quad y=x
\end{equation} 
From Lemma~10.3 in \cite{Khalil1996Noninear}, the system \eqref{eq unforced Lure} is absolutely stable if $Z(s)=I_N+K_\mu(sI_N-A_L)^{-1}$ is strictly positive real. Apparently, $(A_L,I_N)$ is controllable and $(A_L,K_\mu)$ is observable, then according to Lemma~10.3 in \cite{Khalil1996Noninear}, $Z(s)=I_N+K_\mu(sI_N-A_L)^{-1}$ is strictly positive real if and only if \eqref{eq: abs1} and \eqref{eq: abs2} satisfy. 
\end{proof}

In the following, we assume that the sector bound $K_\mu$ of the original Lur'e network system \eqref{lure net} satisfies \eqref{eq:abs}. Then, in the next section, we study how to generate a reduced-order model \eqref{lurerednet} such that it can be interpreted as a reduced network system, and moreover, the absolute stability is retained.

\subsection{Clustering-based Model Reduction}

Clustering-based methods is a well-studied model reduction method for simplifying network systems \cite{Cheng2021AR}, allowing the reduced model to be interpreted as a reduced network, where each node in the reduced network corresponds to a cluster of nodes in the original network. 

Graph clustering is to partition all the nodes in a graph into several nonempty and disjoint subsets. 
A graph clustering  can be characterized by a binary matrix $\Pi \in  \{0,1\}^{N \times r}$, whose element is defined as
\begin{equation*}
\Pi_{ij} :=  \begin{cases}
1 & \text{if node}~i~\text{is included in clustering} ~j,\\
0 & \text{otherwise}. \\ 
\end{cases}
\end{equation*}
Since each node can only belong to one cluster, we have $\Pi \mathbf{1}_N = \mathbf{1}_r$, where $N$ and $r$ represent the numbers of nodes and clusters, respectively.
Given the reduced order $r$, clustering-based model reduction is to find a graph clustering with $r$ clusters of nodes such that $x\approx \hat{x} =  \Pi z$, where $\Pi \in \mathbb{R}^{N \times r}$ is the characteristic matrix of the clustering. Here, $\hat{x}_i$ represents an approximation of the collective behavior of all the nodes in cluster $i$. 

In this paper, we also resort to graph clustering to reduce the networked Lur'e system \eqref{lure net} for preserving the network structure. Let $\Pi \in \mathbb{R}^{N \times r}$ be the characteristic matrix of a clustering of the underlying graph, and denote: 
\begin{align}
    \Pi ^\dagger: =(\Pi^T K_\mu \Pi)^{-1}\Pi ^T K_\mu
\end{align}
 with $K_\mu$ in \eqref{incrementally sector bounded}, such that $\Pi ^\dagger \Pi = I_r$. Then the coefficient matrices in the reduced-order model \eqref{lurerednet} are given by
\begin{align}
\label{eq:redmat}
    \Hat{A}_L=\Pi ^{\dagger}A_L\Pi, \ \Hat{B}=\Pi ^\dagger B, \ \hat\Phi(\cdot) =  \Pi ^\dagger \Phi(\cdot).
\end{align}

First, we show that the reduced-order network model in the form of \eqref{lurerednet} preserves not only the network structure but also the absolute stability.
 
\begin{theorem}
The reduced-order Lur'e network system in \eqref{lurerednet} is absolutely stable and preserves the network structure such that 
$\Hat{A}_L $ is the sum of a positive diagonal matrix and a reduced-dimension Laplacian matrix.
\end{theorem}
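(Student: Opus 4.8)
The plan is to prove the two assertions in turn, exploiting throughout that $\Pi^\dagger$ is the $K_\mu$-weighted left inverse of $\Pi$. For the network structure I would first note that $M:=\Pi^T K_\mu\Pi$ is a positive diagonal matrix whose $j$-th entry is the slope sum $\sum_{i\in\mathcal C_j}\mu_i$ over cluster $\mathcal C_j$, so that $\Pi^\dagger=M^{-1}\Pi^T K_\mu$ satisfies $\Pi^\dagger\Pi=I_r$. Substituting $A_L=-A-L$ gives $\hat A_L=-\Pi^\dagger A\Pi-\Pi^\dagger L\Pi=:-\hat A-\hat L$. Because $K_\mu A$ is diagonal, $\hat A=M^{-1}\Pi^T K_\mu A\Pi$ is diagonal with $j$-th entry the weighted average $(\sum_{i\in\mathcal C_j}\mu_i a_i)/(\sum_{i\in\mathcal C_j}\mu_i)>0$. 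For $\hat L=M^{-1}\Pi^T K_\mu L\Pi$ I would check the two Laplacian properties: $\hat L\mathbf 1_r=M^{-1}\Pi^T K_\mu L\mathbf 1_N=0$ since $\Pi\mathbf 1_r=\mathbf 1_N$ and $L\mathbf 1_N=0$; and for $j\neq l$ the $(j,l)$ entry equals $-M_{jj}^{-1}\sum_{i\in\mathcal C_j,\,k\in\mathcal C_l}\mu_i w_{ik}\le 0$. Thus $\hat L$ is a reduced (generally directed) Laplacian, which establishes the structural claim; as a by-product, Gershgorin's theorem applied to $\hat A_L=-\hat A-\hat L$ places every eigenvalue in the half-plane $\mathrm{Re}(\cdot)\le-\min_j\hat A_{jj}<0$, so $\hat A_L$ is Hurwitz.

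For absolute stability I would reduce to Lemma 1 applied to the reduced triple $(\hat A_L,\hat K_\mu,\hat\Phi)$. Since $\hat x=\Pi z$ is constant within each cluster, each reduced component $\hat\Phi_j$ is a positively weighted average of cluster nonlinearities and therefore remains slope-restricted, so $\hat\Phi$ satisfies the incremental sector bound \eqref{incrementally sector bounded} with the positive diagonal matrix $\hat K_\mu=\Pi^\dagger K_\mu\Pi$. It then suffices to produce $\hat P\succ0$, a symmetric $\hat W$, and $\hat\epsilon>0$ solving the reduced analogues of \eqref{eq: abs1}-\eqref{eq: abs2}. The structural device I would use is that $\Theta:=\Pi\Pi^\dagger$ is idempotent and $K_\mu$-self-adjoint, $K_\mu\Theta=\Theta^T K_\mu$ (equivalently $\Pi^T K_\mu(I-\Theta)=0$); i.e., $\Theta$ is the orthogonal projector onto $\mathrm{range}(\Pi)$ in the $K_\mu$ inner product. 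I would set $\hat P=\Pi^T P\Pi\succ0$ and $\hat W=(\hat K_\mu-\hat P)/\sqrt2$, so that the reduced \eqref{eq: abs2} holds by construction, and verify the reduced \eqref{eq: abs1} as an inequality. Evaluating $V(z)=z^T\hat P z=x^T P x$ with $x=\Pi z$ along the lifted dynamics $\dot x=\Theta(A_L x-\Phi(x))$, the dissipation estimate underlying \eqref{eq: abs1}-\eqref{eq: abs2} reappears, the $K_\mu$-orthogonality annihilates the $K_\mu$-component of the projection defect $I-\Theta$, and completing the square as in the standard Lur'e argument would give $\dot V\le-\hat\epsilon\,z^T\hat P z<0$.

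The main obstacle is exactly this last verification. Because the projection weight is $K_\mu$ whereas the Lyapunov matrix is $P=K_\mu-\sqrt2 W\neq K_\mu$, congruence of \eqref{eq: abs1} leaves a residual cross term proportional to $x^T W(I-\Theta)(A_L x-\Phi(x))$ that the $K_\mu$-orthogonality removes only partially: it cancels the $K_\mu$ part of the defect but not the $W$ part. Handling this term is the technical heart of the proof. I would try to absorb it through a $K_\mu$-adapted refinement of the pair $(\hat P,\hat W)$ so that the defect falls inside the completed square, and, failing that, by using the symmetry of $A_L$ together with the Laplacian structure of $L$ to sign the residual. The construction is clean and immediate in the uniform-slope case $K_\mu=\mu I$, where $\Theta$ becomes the ordinary orthogonal projector and the certificate transfers verbatim, which is the sanity check I would carry out first.
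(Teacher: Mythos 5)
Your first paragraph (the structural claim) is sound and in fact more complete than the paper's own treatment: writing $\hat A_L=-M^{-1}\Pi^TK_\mu A\Pi-M^{-1}\Pi^TK_\mu L\Pi$ with $M=\Pi^TK_\mu\Pi$ diagonal, checking $\hat L\mathbf 1_r=0$ and the sign of the off-diagonal entries, is exactly the right verification (and you correctly use $\Pi\mathbf 1_r=\mathbf 1_N$). The Gershgorin conclusion that $\hat A_L$ is Hurwitz is a valid bonus.

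The absolute-stability half, however, contains a genuine gap that you yourself flag: the ``technical heart'' --- verifying the reduced Lyapunov/KYP inequality for your pair $\hat P=\Pi^TP\Pi$, $\hat W=(\hat K_\mu-\hat P)/\sqrt 2$ --- is never carried out, and you only close it in the special case $K_\mu=\mu I$. The source of the obstruction is a design choice: you fold the nonlinearity down to an $r$-dimensional one, $\hat\Phi(z)=\Pi^\dagger\Phi(\Pi z)$ with averaged sector bound $\hat K_\mu=\Pi^\dagger K_\mu\Pi$, and then try to re-run Lemma~1 at order $r$. This forces the sector matrix appearing in the reduced version of \eqref{eq: abs2} to be $\hat K_\mu\neq\Pi^TK_\mu\Pi$, which is what creates the $P$-versus-$K_\mu$ mismatch and the residual cross term $x^TW(I-\Pi\Pi^\dagger)(A_Lx-\Phi(x))$ you cannot sign. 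The paper avoids this entirely by \emph{not} reducing the sector: it keeps the loop at dimension $N$ and proves strict positive realness of $\hat Z(s)=I_N+K_\mu\Pi(sI_r-\hat A_L)^{-1}\Pi^\dagger$, whose KYP equations read $\hat P\hat A_L+\hat A_L^T\hat P=-\hat W^T\hat W-\epsilon\hat P$ and $\hat P\Pi^\dagger=\Pi^TK_\mu-\sqrt 2\hat W^T$; with the choice $\hat W=W\Pi$ (and hence $\hat P=\Pi^TP\Pi$) these collapse to $\Pi^T(\cdot)\Pi$ congruences of the original certificate \eqref{eq:abs}, so no averaged sector bound and no uncontrolled cross term ever appear. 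To repair your argument you should either switch to this lifted SPR formulation, or prove that your residual term can be absorbed; as written, the stability claim is established only for uniform slopes, so the proof of the theorem is incomplete.
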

\begin{proof}
For the reduced-order Lur'e network system \eqref{eq unforced lurerednet}, its unforced system is
\begin{equation} \label{eq unforced lurerednet}
        \Dot{z}=\Hat{A}_Lz-\Pi ^{\dagger}\Phi(\Hat{x}), \quad
        \Hat{x}=\Pi z
\end{equation}
First, we need to prove $(\Hat{A}_L, \Pi^\dagger)$ is controllable, $(\hat A_L,\Pi)$ is observable.
The controllability and observability can be seen from the controllability and observability
matrices 
\begin{align*}
    \hat{\mathcal{C}}&=\begin{bmatrix}\Pi^\dagger&\Hat{A}_L\Pi^\dagger&\Hat{A}_L^2\Pi^\dagger&\cdots&\Hat{A}_L^{r-1}\Pi^\dagger\end{bmatrix}, \\
  \hat{\mathcal{O}}&=\begin{bmatrix}
        \Pi&(\Pi \Hat{A}_L)^T&(\Pi \Hat{A}_L^2)^T&\cdots&(\Pi \Hat{A}_L^{r-1})^T
    \end{bmatrix}^T,
\end{align*}
which are full rank due to $\text{rank}(\Pi)=\text{rank}(\Pi^\dagger)=r$.

Following the similar reasoning as the proof of Lemma 1, to show the  absolute stability of the above
system, we need to prove $\hat{Z}(s)=I_N+K_\mu \Pi(sI_r-\hat{A}_L)^{-1}\Pi^\dagger$ is strictly positive real. Notice that $(\hat A_L,K_\mu \Pi)$ is observable since $(\hat A_L,\Pi)$ is observable and $K_\mu$ is nonsingular, By Lemma 10.2 in \cite{Khalil1996Noninear}, $\hat{Z}(s)=I_N+K_\mu \Pi(sI_r-\hat{A}_L)^{-1}\Pi^\dagger$ is positive real if and only if there exist a positive definite symmetric matrix $\hat{P}$, matrix $\hat{W}$ and a positive constant $\epsilon$ such that
\begin{subequations}
    \begin{align}
    \hat{P}\hat{A}_L+\hat{A}_L^T\hat{P}&=-\hat{W}^T\hat{W}-\epsilon\hat{P}\label{eq: re abs} \\ 
    \hat{P}\Pi^\dagger&=\Pi^TK_\mu-\sqrt{2}\hat{W}^T \label{eq: re abs2}
\end{align}
\end{subequations}

From \eqref{eq: re abs}, we have
\begin{align}\label{eq: re abs3}
  \hat{P}\Pi^\dagger A_L \Pi+\Pi^T A_L^T(\Pi^\dagger)^T\hat{P}&=-\hat{W}^T\hat{W}-\epsilon\hat{P}  
\end{align}
By \eqref{eq: re abs2} and \eqref{eq: re abs3}, we have
\begin{align}\label{eq: re abs P}
  \hat{P}=\Pi^TK_\mu\Pi-\sqrt{2}\hat{W}^T \Pi 
\end{align}
and
\begin{align}\label{eq: re abs4}
    &(\Pi^TK_\mu-\sqrt{2}\hat{W}^T) A_L \Pi+\Pi^T A_L^T(\Pi^TK_\mu-\sqrt{2}\hat{W}^T)^T \nonumber\\
  &=-\hat{W}^T\hat{W}-\epsilon(\Pi^TK_\mu\Pi-\sqrt{2}\hat{W}^T \Pi )  
\end{align}
If we choose $\hat{W}=W\Pi$, from \eqref{eq: re abs4}, we have 
\begin{align}
   &\Pi^T(K_\mu-\sqrt{2} W) A_L \Pi+\Pi^T A_L^T(K_\mu-\sqrt{2} W)\Pi \nonumber\\
  &=-\Pi^T W^TW\Pi-\epsilon(\Pi^TK_\mu\Pi-\sqrt{2}\Pi^T W \Pi ), 
\end{align}
From \eqref{eq:abs}, we can obtain
\begin{align}
   &\Pi^T P A_L \Pi+\Pi^T A_L^T P\Pi =-\Pi^T W^TW\Pi-\epsilon \Pi^TP\Pi. 
\end{align}
Thus \eqref{eq: re abs} and \eqref{eq: re abs2} hold if \eqref{eq:abs} satisfies, then $\hat{Z}(s)=I_N+K_\mu \Pi(sI_r-\hat{A}_L)^{-1}\Pi^\dagger$ is strictly positive real, which implies the unforced reduced-order system is also absolutely stable.
\end{proof}

Theorem 1 shows that by using the specific formulation of reduced-order matrices in \eqref{eq:redmat},  the reduced-order Lur'e network system \eqref{lurerednet} is guaranteed to be absolutely stable,   regardless of the choice of the characteristic matrix $\Pi$. Then, in the following section, we will study how the matrix $\Pi$ can affect the approximation error between the original and reduced-order systems.

\subsection{Analysis of Reduction Error}\label{AA}

In this section, we derive an upper bound on the reduction error between the original and reduced-order Lur'e networks, represented as a function of $\Pi$, i.e., the characteristic matrix of graph clustering.

Before proceeding, the following lemma first presents the upper bounds on the approximation errors on the linear parts.
\begin{lemma}\label{Lemma3}
Denote the following transfer matrices for the linear parts in the original and reduced-order systems:
\begin{align}
\label{eq:tfs}
    & g_{u_e}(s) =(sI_N-A_L)^{-1}B, \ 
     g_v(s) =(sI_N-A_L)^{-1},  \nonumber 
     \\ 
    & \hat{g}_{u_e}(s)  =\Pi(sI_r-\hat{A}_L)^{-1}\hat{B},  \ 
  \hat{g}_v(s)   =\Pi(sI_r-\hat{A}_L)^{-1}\Pi^{\dagger}.
\end{align}
Assume \eqref{eq:abs} holds, then the following error bounds hold:
 \begin{align}
     \label{e_gue}
    \|g_{u_e}(s)-\hat{g}_{u_e}(s)\|_{\mathcal{H}_\infty}\leq \gamma_H\|g_{u_e}(s)\|_{\mathcal{H}_\infty},
\\
\label{e_gv}
    \|g_v(s)-\hat{g}_v(s)\|_{\mathcal{H}_\infty}\leq \gamma_H\|g_v(s)\|_{\mathcal{H}_\infty},
 \end{align}
where $\gamma_H$ is the $\mathcal{H}_\infty$ norm of the following linear system
\begin{equation}\label{eq hs}
  H(s, \Pi)=C_H(sI_r-A_H)^{-1}B_H+D_H,
\end{equation}
with $A_H=\Pi^{\dagger}A_L\Pi$, $B_H=\Pi^{\dagger}A_L(I_N-\Pi \Pi^{\dagger})$, $C_H=\Pi$, and $D_H=I_N-\Pi \Pi^{\dagger}$.
$\hfill{} \blacksquare$
\end{lemma}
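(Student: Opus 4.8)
The plan is to reduce both inequalities to a single operator-norm factorization and then invoke submultiplicativity of the $\mathcal{H}_\infty$ norm. The central claim I would establish first is the identity
\begin{equation*}
g_v(s) - \hat g_v(s) = H(s,\Pi)\, g_v(s),
\end{equation*}
that is, the error transfer matrix on the linear part equals $H(s,\Pi)$ left-multiplying the original $g_v(s) = (sI_N - A_L)^{-1}$. Since $g_{u_e}(s) = g_v(s) B$ and $\hat g_{u_e}(s) = \hat g_v(s) B$, postmultiplying this identity by $B$ immediately gives $g_{u_e}(s) - \hat g_{u_e}(s) = H(s,\Pi)\, g_{u_e}(s)$ as well, so both error transfer matrices share the same left factor $H(s,\Pi)$. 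Once these factorizations are in hand, the submultiplicativity bound $\|XY\|_{\mathcal{H}_\infty} \le \|X\|_{\mathcal{H}_\infty}\|Y\|_{\mathcal{H}_\infty}$ yields $\|g_v - \hat g_v\|_{\mathcal{H}_\infty} \le \gamma_H \|g_v\|_{\mathcal{H}_\infty}$ and likewise for $g_{u_e}$, which are exactly \eqref{e_gv} and \eqref{e_gue}.

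The bulk of the work is verifying the factorization identity. I would write $P = (sI_N - A_L)^{-1}$, $\hat P = (sI_r - \hat A_L)^{-1}$, and $M = I_N - \Pi\Pi^\dagger$, so that $H(s,\Pi) = \Pi \hat P \Pi^\dagger A_L M + M$. Substituting into the claimed identity and cancelling the common term $MP$ reduces the goal to showing $\Pi[\Pi^\dagger P - \hat P \Pi^\dagger] = \Pi \hat P \Pi^\dagger A_L M P$, which after premultiplying by $\hat P^{-1} = sI_r - \hat A_L$ and expanding $\hat A_L = \Pi^\dagger A_L \Pi$ collapses: the two occurrences of $\Pi^\dagger A_L \Pi\Pi^\dagger P$ cancel, leaving $\Pi^\dagger(sI_N - A_L)P = \Pi^\dagger$, which holds by definition of $P$. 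The only algebraic facts needed are $\Pi^\dagger\Pi = I_r$ and $\hat A_L = \Pi^\dagger A_L \Pi$; notably, the weighting $K_\mu$ hidden inside $\Pi^\dagger$ never has to be used explicitly, so the factorization holds for any left inverse of $\Pi$.

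The main obstacle I anticipate is purely bookkeeping: $\Pi\Pi^\dagger$ is an oblique (non-orthogonal) projector, so one cannot freely move it across $A_L$ or treat it as symmetric, and the resolvent manipulations must be carried out on the correct side throughout. A secondary point worth recording is that $\gamma_H$ is finite: by Theorem~1 the reduced matrix $\hat A_L = A_H$ is the sum of a positive diagonal matrix and a reduced Laplacian, hence negative definite and Hurwitz, so $H(s,\Pi)$ is a stable proper system and its $\mathcal{H}_\infty$ norm is well defined. With the factorization established, both error bounds follow at once.
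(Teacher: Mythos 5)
Your proof is correct and rests on exactly the same key identity as the paper, namely the factorization $g(s)-\hat g(s)=H(s,\Pi)\,g(s)$ for both $g_{u_e}$ and $g_v$, followed by submultiplicativity of the $\mathcal{H}_\infty$ norm. The difference is only in how that identity is obtained: the paper builds the augmented error realization $(A_e,B_e,C_e)$ and applies the similarity transformation $T=\bigl[\begin{smallmatrix}-\Pi^\dagger & I_r\\ I_N & 0\end{smallmatrix}\bigr]$ (the Ishizaki-style coordinate change) to read the factorization off a block-triangular realization, whereas you verify it directly in the frequency domain by resolvent algebra; your cancellation argument checks out, and your observation that $g_{u_e}=g_v B$, $\hat g_{u_e}=\hat g_v B$ lets you derive \eqref{e_gue} from \eqref{e_gv} in one line, which is slightly cleaner than the paper's repetition of the construction for each transfer matrix. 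Your remark that only $\Pi^\dagger\Pi=I_r$ is needed (not the specific $K_\mu$-weighting) is a valid and mildly more general observation. One small correction on the side issue of well-posedness of $\gamma_H$: $A_H=\hat A_L=\Pi^\dagger A_L\Pi$ is in general \emph{not} symmetric (the paper itself notes $\Pi^\dagger L\Pi$ loses symmetry), so calling it ``negative definite'' is not literally right; the paper instead establishes that $A_H$ is Hurwitz via the Lyapunov inequality $(\Pi^\dagger A_L\Pi)^TP_H+P_H\Pi^\dagger A_L\Pi=\Pi^T(A_L^TK_\mu+K_\mu A_L)\Pi<0$ with $P_H=\Pi^TK_\mu\Pi$, and you should use that (or an equivalent) argument rather than appealing to definiteness of a nonsymmetric matrix.
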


The proof of {Lemma \ref{Lemma3}} is shown in Appendix~\ref{ap:lin_bound}. Since $\Pi^\dagger A_L\Pi=(\Pi ^TK_\mu \Pi)^{-1}\Pi ^TK_\mu A_L\Pi$, which is Hurwitz for any full rank characteristic matrix $\Pi$, the $\mathcal{H}_\infty$ norm of $H(s)$ always exists.  Based on Lemma~\ref{Lemma3}, we present the approximation error bound in the following. 

  \begin{theorem}
  \label{thm:errbound}
Assume \eqref{eq:abs} holds. If $\|g_v(s)\|_{\mathcal{H}_\infty}<\frac{1}{(\gamma_H+1)\mu_{\text{max}}}$, the $\mathcal{H}_{\infty}$ norm of the approximation error is bounded by
\begin{equation}\label{errorbound}
    \|x(t)-\hat{x}(t)\|_2 
  \leq \Gamma(\gamma_H) \|u_e(t)\|_2,
\end{equation}
where
\begin{equation}\label{Gamma}
    \Gamma(\gamma_H)=\frac{\gamma_H\kappa_{u_e}}{\left[1-(\gamma_H+1)\mu_{\text{max}} \kappa_v\right](1-\mu_\text{max}\kappa_v)},
\end{equation}
and $\kappa_{u_e}=\|g_{u_e}(s)\|_{\mathcal{H}_\infty}$, $\kappa_v=\|g_v(s)\|_{\mathcal{H}_\infty}$.
$\hfill{} \blacksquare$
\end{theorem}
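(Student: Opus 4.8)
The plan is to recast each closed-loop system as a linear map from its inputs to its state, treating the sector-bounded nonlinearity as an exogenous feedback signal, and then to close a small-gain loop on the error $e := x - \hat{x}$. Introducing $v := \Phi(x)$ and $\hat v := \Phi(\hat x)$, the original and reduced states admit the frequency-domain expressions $X = g_{u_e} U_e - g_v V$ and $\hat X = \hat g_{u_e} U_e - \hat g_v \hat V$ in terms of the transfer matrices of Lemma~\ref{Lemma3}. The only properties of $\Phi$ I will use are the Lipschitz estimates implied by the slope restriction \eqref{assu:eqsecbo}, namely $\|v\|_2 \le \mu_{\text{max}}\|x\|_2$, $\|\hat v\|_2 \le \mu_{\text{max}}\|\hat x\|_2$, and $\|v-\hat v\|_2 = \|\Phi(x)-\Phi(\hat x)\|_2 \le \mu_{\text{max}}\|e\|_2$, where $\mu_{\text{max}} = \max_i \mu_i$.

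First I would bound the full-order state by the input alone. Applying the $\mathcal{H}_\infty$-to-$\mathcal{L}_2$ gain inequality to $X = g_{u_e}U_e - g_v V$ and using $\|v\|_2 \le \mu_{\text{max}}\|x\|_2$ yields $\|x\|_2 \le \kappa_{u_e}\|u_e\|_2 + \mu_{\text{max}}\kappa_v \|x\|_2$. Because the hypothesis $\kappa_v < 1/[(\gamma_H+1)\mu_{\text{max}}]$ forces $\mu_{\text{max}}\kappa_v < 1$, this rearranges to $\|x\|_2 \le \kappa_{u_e}\|u_e\|_2/(1-\mu_{\text{max}}\kappa_v)$, which is precisely the second factor appearing in the denominator of $\Gamma(\gamma_H)$ in \eqref{Gamma}.

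Next comes the error analysis. Subtracting the two state expressions and adding and subtracting $g_v\hat V$ gives the decomposition $e = (g_{u_e}-\hat g_{u_e})U_e - g_v(V-\hat V) - (g_v-\hat g_v)\hat V$, which isolates the two mismatch transfer matrices controlled by Lemma~\ref{Lemma3}. Taking $\mathcal{L}_2$ norms, invoking \eqref{e_gue}--\eqref{e_gv}, the Lipschitz estimates, and the bound $\|\hat v\|_2 \le \mu_{\text{max}}(\|x\|_2 + \|e\|_2)$ (from $\hat x = x - e$), I obtain $\|e\|_2 \le \gamma_H\kappa_{u_e}\|u_e\|_2 + (\gamma_H+1)\mu_{\text{max}}\kappa_v\|e\|_2 + \gamma_H\mu_{\text{max}}\kappa_v\|x\|_2$. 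Collecting the $\|e\|_2$ terms is legitimate exactly when $(\gamma_H+1)\mu_{\text{max}}\kappa_v < 1$, i.e.\ under the stated hypothesis; substituting the bound on $\|x\|_2$ from the previous step, the resulting factor $1 + \mu_{\text{max}}\kappa_v/(1-\mu_{\text{max}}\kappa_v)$ collapses to $1/(1-\mu_{\text{max}}\kappa_v)$, reproducing exactly $\Gamma(\gamma_H)$.

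I expect the main obstacle to be organizing this nested small-gain argument so the constants combine cleanly: the error loop is self-referential in $\|e\|_2$ and is additionally driven by $\|\hat v\|_2$, which must be routed through $\|x\|_2$ rather than through a separate bound on $\|\hat x\|_2$ in order for the two denominator factors to assemble into the claimed product form. A secondary point requiring care is justifying that each difference of transfer matrices is stable so that the gain inequality applies; this follows from $A_L$ being Hurwitz and from $\hat A_L = \Pi^\dagger A_L \Pi$ being Hurwitz, as already remarked after Lemma~\ref{Lemma3}.
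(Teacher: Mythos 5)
Your proof is correct and takes essentially the same route as the paper: a small-gain loop on $e=x-\hat x$ using the multiplicative bounds of Lemma~\ref{Lemma3} and the Lipschitz/sector estimates on $\Phi$. The only cosmetic difference is the add-and-subtract term in the error decomposition (you insert $g_v\hat V$ and route the drive through $\|x\|_2$ and $\|\hat v\|_2$, whereas the paper inserts $\hat g_v V$, bounds $\|\hat g_v\|_{\mathcal{H}_\infty}\le(\gamma_H+1)\kappa_v$, and routes the drive through $\|v\|_2$); both bookkeepings assemble into the identical $\Gamma(\gamma_H)$.
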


The proof of Theorem~\ref{thm:errbound} is provided in the Appendix~\ref{ap:bound}. 

\begin{remark}
    The assumption of the theorem $\kappa_v=\|g_v(s)\|_{\mathcal{H}_\infty}<\frac{1}{(\gamma_H+1)\mu_{\text{max}}}$ implies  that the two terms in the denominator of $\Gamma(\gamma_H)$, i.e. $1-(\gamma_H+1)\mu_{\text{max}}\kappa_v $ and $ (1-\mu_{\text{max}}\kappa_v)$, are both positive. Hence the error bound \eqref{errorbound} is well-defined. 
\end{remark}

Observe that in \eqref{Gamma}, only $\gamma_H$ is dependent on the choice of $\Pi$, or equivalently, graph clustering of the original network. The other parameters, $\mu_{\text{max}}$, $\kappa_{u_e}$ and $\kappa_v$, are priori since they are determined by the original network system.

Furthermore, it can be verified that $\Gamma(\gamma_H)$ is a monotonically increasing function with respect to $\gamma_H$, i.e. a smaller $\gamma_H$ will lead to lower $\Gamma(\gamma_H)$. As $\gamma_H$ is the $\mathcal{H}_\infty$-norm of the linear system $H(s)$, we can use a Riccati inequality or an LMI to characterize $\mathcal{H}_\infty$, then $\Pi$ can be selected to minimize $\gamma_H$. This would also lead to a lower error bound on the approximation the nonlinear Lur'e network.   
In particular, if $\Pi = I_N$, then $\gamma_H=0$, which yields $\Gamma(\gamma_H)=0$, meaning that the reduced-order model has exactly the same outputs as the original system with the same external inputs applied.

It is also worth mentioning that since the dimension of $A_H$ in \eqref{eq hs} is $r$, that is the dimension of the reduced-order system, to obtain $\gamma_H$ does not require an expensive computation. Therefore it will be beneficial for the subsequent  optimization procedure that is to find an optimal $\Pi$ to minimize $\gamma_H$. However, we leave the detailed discussion to our future work.


		
		
		
		
		
	  		
	  		
	  		
		
		
	 
		

\section{Simulation Results}
\label{sec:simulation}
To illustrate the proposed model reduction approach for Lur'e networks, we consider a netwrok example of   100 nodes which is shown as Fig. \ref{fig: original network}. The network is generated by the B-A Scale-Free Network Generation algorithm  \cite{George2023ba}. The 100 nodes are divided into 7 clusters: nodes 1, 2, 3-22, 23-42, 43-62, 63-81, and 82-100. The reduced-order network resulting from the given clustering is shown in Fig.~\ref{fig: reduced-order network}. Note that the reduced graph is now bidirectional, but it is not undirected, as $\Pi^\dagger L \Pi$ is no longer symmetric. 

For  the original system in \eqref{lure net} and \eqref{AL}, we choose $A=2I_{100}$, $F=[I_2, 0_{98\times 98}]^T$. The nonlinearity of each subsystem is
\[ \phi(x_i)=|x_i+0.1|-|x_i-0.1|,\] 
thus $K_\mu=0.2I_{100}$.

First, we show both the unforced original system and the reduced-order system are absolutely stable. Set the external input $u_e=0$ and choose random     values ranging from $-1$ to $1$ as initial states of both the original and reduced-order systems. From Fig. \ref{fig:x} and Fig. \ref{fig:z}, we observe that  the states of both the original  and the reduced-order Lur'e network systems asymptotically converge to $0$, which implies the stability.

Then we choose both the external inputs as $\sin(t)$, and set the initial states of both systems are zero. The state trajectories are plotted in Fig. \ref{fig:y} and Fig. \ref{fig:yr}, respectively, where the nodes in the same cluster are indicated by the same color. Note that nodes 1 and 2 form two clusters, and the approximation errors are shown to be relative small. In contrast, the other clusters are formed without any optimization, leading to larger approximation errors. 

To validate the error bound in Theorem~\ref{thm:errbound}, we estimate the input-to-output error of the model reduction as follows:
\begin{align*}
    \gamma_\epsilon^2 = \frac{\int_{0}^{T} x^T(t)x(t)dt}{\int_{0}^{T} u^T(t)u(t)dt} \approx   \frac{\sum_{k=0}^{T/\delta t} x^T(k)x(k)\delta t}{\sum_{k=0}^{T/\delta t} u^T(k)u(k) \delta t},
\end{align*}
where $T$ is the length of the simulation time, and $\delta t$ is the stepsize. In this simulation, we obtain $\gamma_\epsilon = 0.0761$. Meanwhile, using Lemma~\ref{Lemma3}, we compute 
$\|g_{u_e}(s)\|_{\mathcal{H}_\infty}=0.1372$, $\|g_v(s)\|_{\mathcal{H}_\infty}=0.5$, and $\gamma_H=1.2607$. Therefore, $\|g_v(s)\|_{\mathcal{H}_\infty}<\frac{1}{(\gamma_H+1)\mu_{\text{max}}}$ holds. It
then leads to the error bound $\Gamma(\gamma_H)=0.2484$, according to Theorem~\ref{thm:errbound}.

		\begin{figure}[!tp]
			\begin{minipage}[t]{0.6\linewidth}
				\centering
				\includegraphics[width=\textwidth]{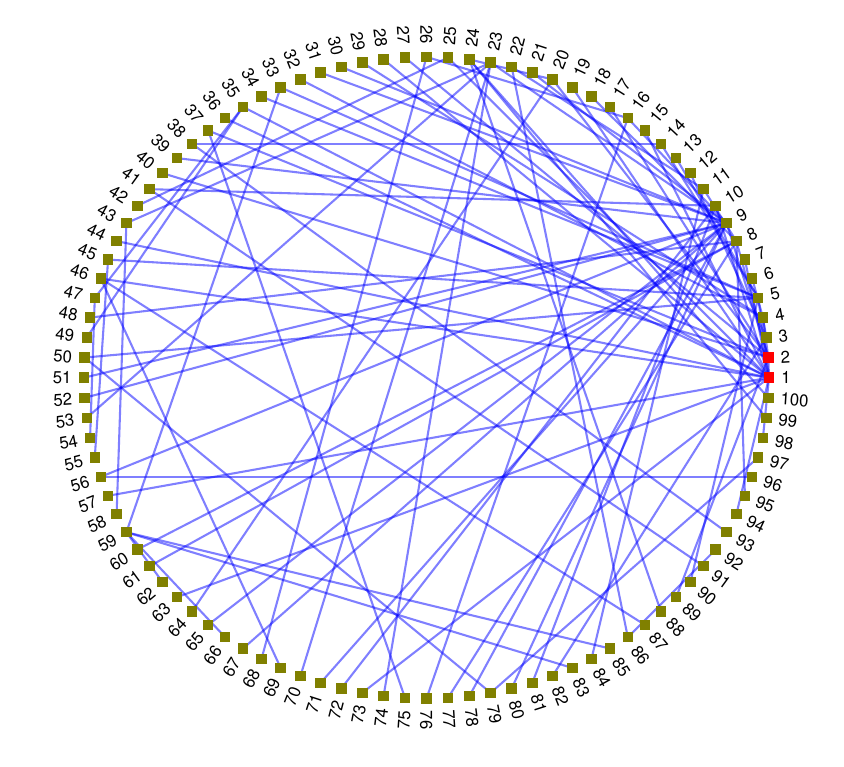}    
				\subcaption{}
				 \label{fig: original network}
			\end{minipage}%
			\begin{minipage}[t]{0.4\linewidth}
				\centering
		\includegraphics[width=\textwidth]{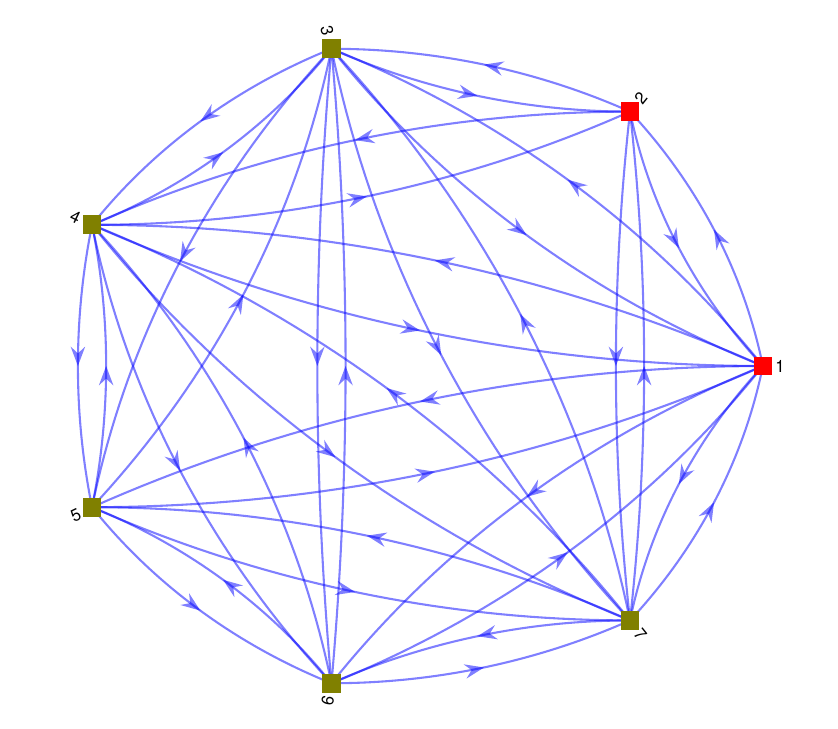}    
				\subcaption{}
				\label{fig: reduced-order network}
			\end{minipage}%
			\caption{(a) The topology of the original network. (b) The topology of the reduced network. The input nodes are highlighted by the red color.}
		\end{figure} 
 
		\begin{figure}[!tp]
			\begin{minipage}[t]{0.5\linewidth}
				\centering
				\includegraphics[width=\textwidth]{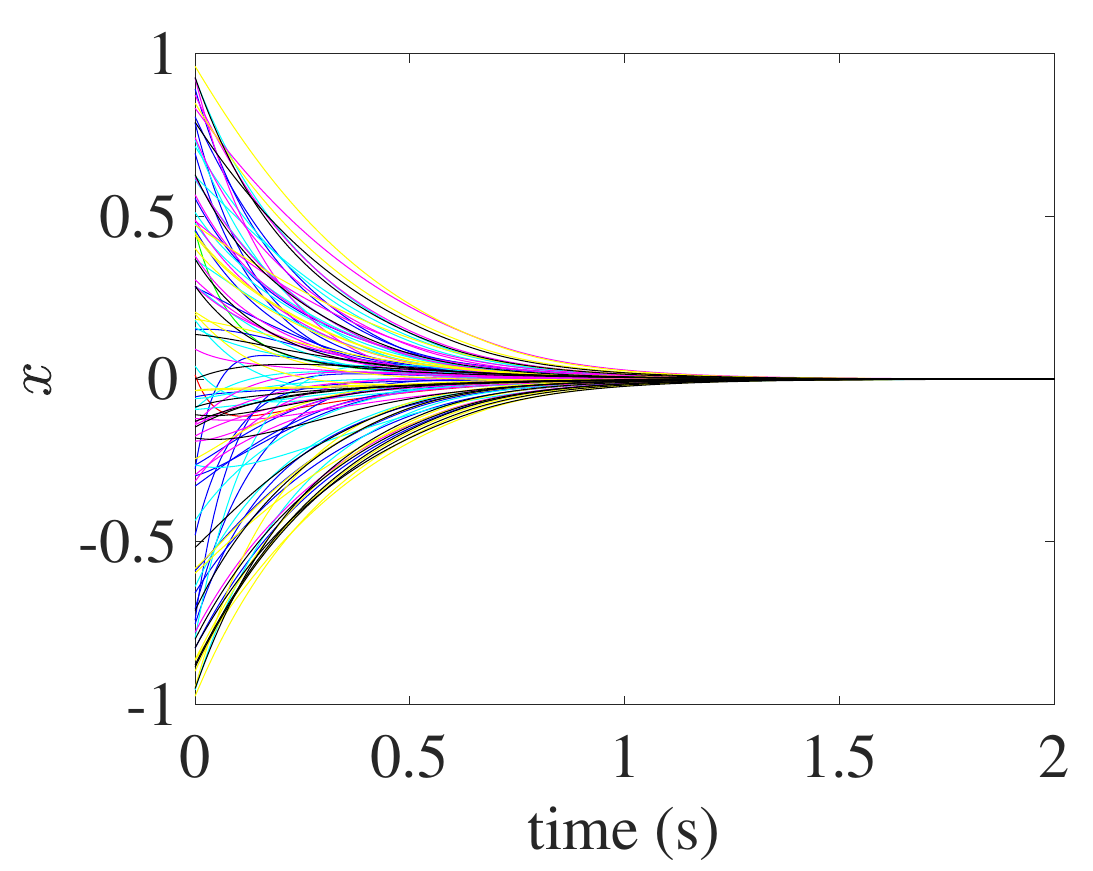}    
				\subcaption{}
				\label{fig:x}
			\end{minipage}%
			\begin{minipage}[t]{0.5\linewidth}
				\centering
				\includegraphics[width=\textwidth]{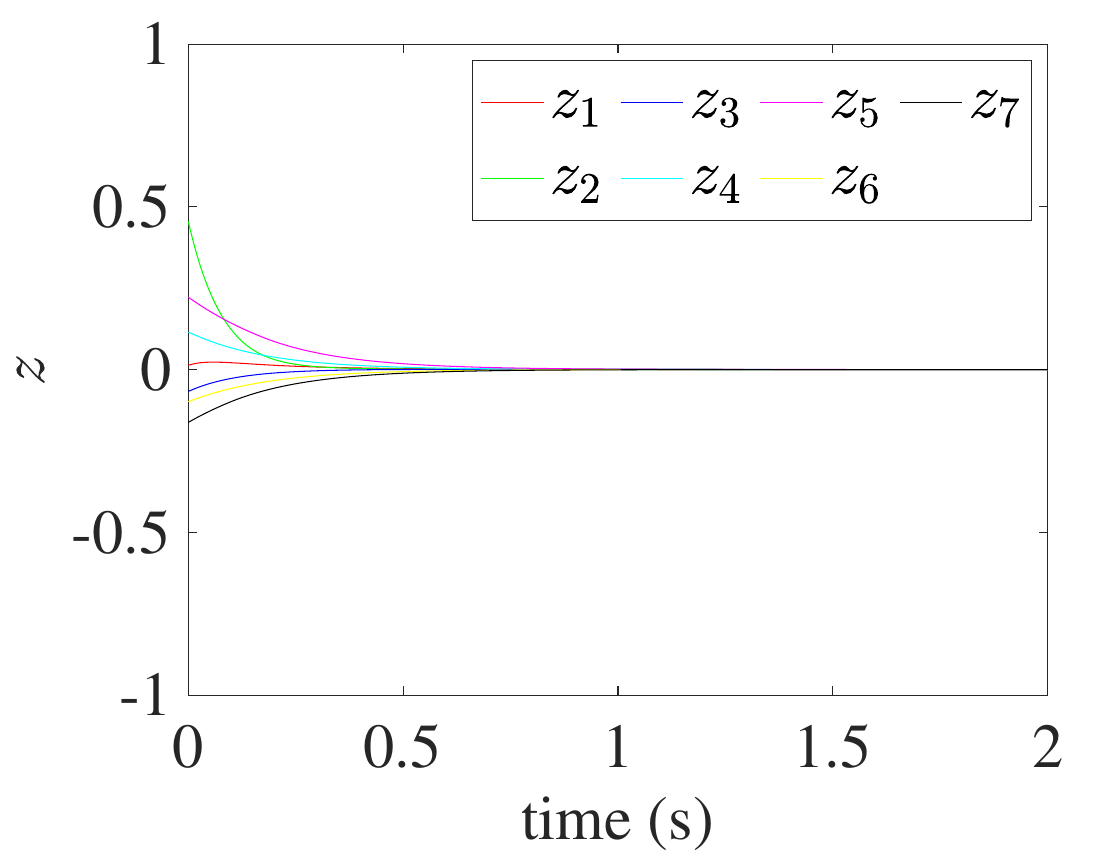}    
				\subcaption{}
				\label{fig:z}
			\end{minipage}%
			\caption{(a) State trajectories of the original Lur'e network system with random initial states and $u_e = 0$. (b)  State trajectories of the reduced-order  system with random initial states and  $u_e = 0$.}
		\end{figure} 
		\begin{figure}[!t]
			\begin{minipage}[t]{0.5\linewidth}
				\centering
				\includegraphics[width=\textwidth]{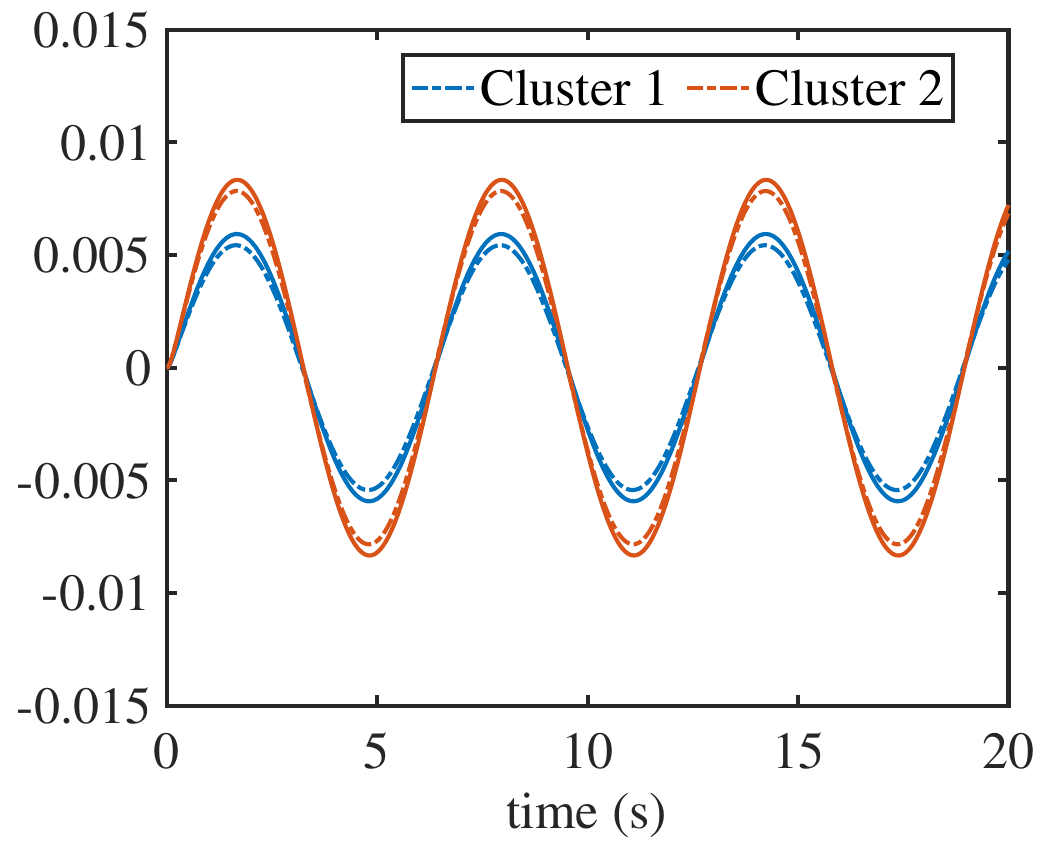}    
				\subcaption{}
				\label{fig:y}
			\end{minipage}%
			\begin{minipage}[t]{0.5\linewidth}
				\centering
				\includegraphics[width=\textwidth]{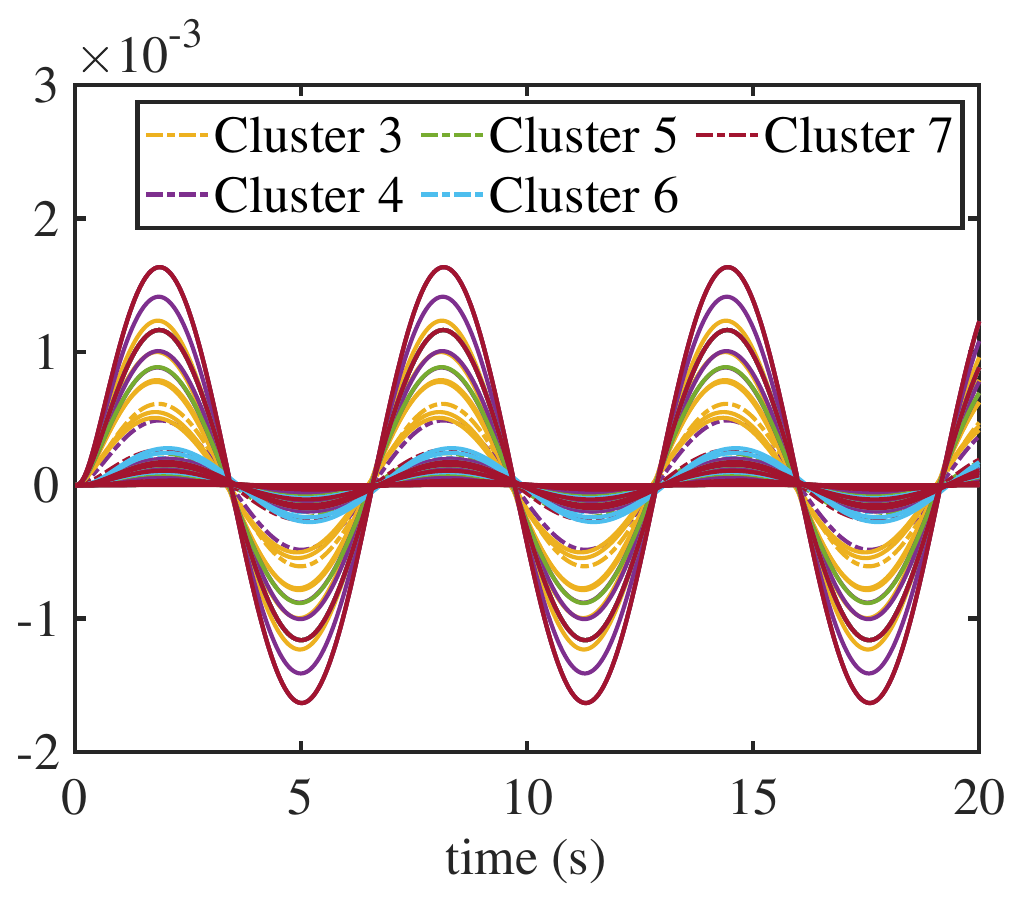}    
				\subcaption{}
				\label{fig:yr}
			\end{minipage}%
			\caption{(a) State trajectories of the original and reduced-order Lur'e network system for Clusters 1 and 2. (b) State trajectories of the original and reduced-order  Lur'e network system for Clusters 3 to 7. The solid curves represent the trajectories of the original network, and the dashed ones represent the trajectories of the reduced network.}
		\end{figure}

\section{Conclusions}
\label{sec:conclusion}


In this paper, we have introduced a clustering-based model reduction technique aimed at preserving the network structure of Lur'e network systems while ensuring the crucial property of absolute stability. We have provided the input-output error bound, which is determined by the $\mathcal{H}_\infty$-norm of a linear system parameterized in the characteristic matrix of a graph clustering.  
As for our future works, we will extend the proposed model reduction approach to  Lur'e network systems with non-scalar subsystems. Additionally, we will explore the preservation method of other important properties, such as synchronization and passivity.

\section*{Appendix}
\renewcommand{\thesubsection}{\Alph{subsection}}
\subsection{Proof of Lemma~\ref{Lemma3}}
\label{ap:lin_bound}
The transfer function of the error system for $u_e$ is
\begin{equation}
    g_{u_e}(s)-\hat{g}_{u_e}(s)=C_e\left(sI_{N+r}-A_e \right)^{-1}B_e,
\end{equation}
where $A_e=\begin{bmatrix}
    A_L&\\ & \hat{A}_L
\end{bmatrix}$, $B_e=\begin{bmatrix}
    F\\ \hat{F}
\end{bmatrix}$, $C_e=\begin{bmatrix}
    I &-\Pi
\end{bmatrix}$. 
Inspired by \cite{ishizaki2013model},
we introduce a pair of transfer matrices 
\begin{equation}\label{transfer matrices}
  T=\begin{bmatrix}
  -\Pi^{\dagger}&I_r\\I_N&0  
\end{bmatrix},~T^{-1}=\begin{bmatrix}
  0&I_N\\I_r&\Pi^{\dagger} 
\end{bmatrix}  
\end{equation}
such that
\begin{equation}\label{eq: err matrix}
\begin{aligned}
    \Tilde{A}_e&=TA_eT^{-1}=\begin{bmatrix}
        \Pi^{\dagger}A_L\Pi&\Pi^{\dagger}A_L(-I_N+\Pi \Pi^{\dagger})\\0&A_L
    \end{bmatrix},\\
    \Tilde{B}_e&=TB_e=\begin{bmatrix}
       0\\F
    \end{bmatrix},\\
    \Tilde{C}_e&=C_eT^{-1}=\begin{bmatrix}
        -\Pi&I_N-\Pi \Pi^{\dagger}
    \end{bmatrix}.
\end{aligned}
\end{equation}
Thus, from \eqref{eq: err matrix}, we have
\begin{equation}\label{gue error}
\begin{aligned}
   &g_{u_e}(s)-\hat{g}_{u_e}(s)=\Tilde{C}_e\left(sI_{N+r}-\Tilde{A}_e \right)^{-1}\Tilde{B}_e\\
   &=\left[\Pi(sI_r-\Pi^{\dagger}A_L\Pi)^{-1}\Pi^{\dagger}A_L+I_N\right](I_N-\Pi \Pi^{\dagger})\\
   &=H(s)g_{u_e}(s), 
\end{aligned}
\end{equation}
where $H(s)$ is shown as ($\ref{eq hs}$). Similarly, we obtain 
\begin{equation}\label{gv error}
\begin{aligned}
   g_v(s)-\hat{g}_v(s)
   =\Tilde{C}_{ve}\left(sI_{N+r}-\Tilde{A}_{ve} \right)^{-1}\Tilde{B}_{ve}=H(s)g_v(s), 
\end{aligned} 
\end{equation}
where $H(s)$ is shown as ($\ref{eq hs}$).

Then, we show that $\Pi^\dagger A_L\Pi$ is Hurwitz for any full rank characteristic matrix $\Pi$. Note that there exists a positive definite matrix $P_H: = \Pi^T K_\mu \Pi$ such that
\begin{align}
   &(\Pi^\dagger A_L\Pi)^T P_H + P_H \Pi^\dagger A_L\Pi 
   \nonumber\\
   =& \Pi^T A_L^T K_\mu  \Pi+ \Pi^T K_\mu A_L \Pi \nonumber\\
   =& \Pi^T (A_L^T K_\mu  + K_\mu A_L) \Pi  
   < 0.
\end{align}
Therefore, $H(s)$ is asymptotically stable, and its $\mathcal{H}_\infty$-norm is well defined.

Finally, according to (\ref{gue error}) and (\ref{gv error}), the inequality (\ref{e_gue}) and (\ref{e_gv}) hold.


\subsection{Proof of Theorem~\ref{thm:errbound}}
\label{ap:bound}
In the complex frequency domain, we have 
\begin{subequations}
\begin{align}
\label{eq:Y}
  X(s)&=g_{u_e}(s)U_e(s)+g_v(s)V(s), \\
  \hat{X}(s) & =\hat{g}_{u_e}(s)U_e(s)+\hat{g}_v(s)\hat{V}(s),
\end{align}
\end{subequations}
where $X(s)$ and $\hat{X}(s)$ are the Laplace transforms of the time domain signals $x(t)$ and $\hat{x}(t)$, respectively, assuming zero initial conditions. All the transfer matrices are defined in \eqref{eq:tfs}. Then, the approximation error in the complex frequency domain is given as 
 \begin{equation}
 \begin{aligned}
  &X(s)-\hat{X}(s)\\
  &=[g_{u_e}(s)-\hat{g}_{u_e}(s)]U_e(s)+[g_v(s)-\hat{g}_v(s)]V(s)\\
  &\quad +\hat{g}_v(s)[V(s)-\hat{V}(s)],
 \end{aligned}
 \end{equation}
 which leads to the following upper bound in the time domain:
\begin{equation}
\begin{aligned}
 \|x(t)-\hat{x}(t)\|_2  
   \leq  &\|g_{u_e}(s)-\hat{g}_{u_e}(s)\|_{\mathcal{H}_\infty}\|u_e(t)\|_2\\
  &+\|g_v(s)-\hat{g}_v(s)\|_{\mathcal{H}_\infty}\|v(t)\|_2\\  &+\|\hat{g}_v(s)\|_{\mathcal{H}_\infty}\|v(t)-\hat{v}(t)\|_2.  
\end{aligned}
\end{equation}
Then, we analyze each terms in the above error bound.

First, we discuss the bound of $\|v(t)-\hat{v}(t)\|_2$ and $\|v(t)\|_2$. Following a same procedure in \cite{ChengEJC201911}, we make use of the incremental sector bounded condition (\ref{incrementally sector bounded}) and obtain
\begin{equation*}
\begin{aligned}
& [v(t)-\hat{v}(t)]^T[v(t)-\hat{v}(t)]\\
&\leq [v(t)-\hat{v}(t)]^T[v(t)-\hat{v}(t)]\\
& \quad-[v(t)-\hat{v}(t)]^T[v(t)-\hat{v}(t)-K_\mu(\hat{x}(t)-x(t))]\\
&=[v(t)-\hat{v}(t)]^TK_\mu[\hat{x}(t)-x(t)]\\
       &\leq \frac{1}{2}[v(t)-\hat{v}(t)]^T[v(t)-\hat{v}(t)  \\
       &\quad +\frac{1}{2}[x(t)-\hat{x}(t)]^TK^2_\mu[x(t)-\hat{x}(t)].
\end{aligned}
\end{equation*}
Thus, $[v(t)-\hat{v}(t)]^T[v(t)-\hat{v}(t)] 
\leq \mu_{\text{max}}^2[x(t)-\hat{x}(t)]^T[x(t)-\hat{x}(t)], $
where $\mu_{\text{max}} > 0$ is the largest element of $K_\mu$. This also implies 
\begin{equation}\label{eq err_v}
  \|v(t)-\hat{v}(t)\|_2\leq \mu_{\text{max}}\|x(t)-\hat{x}(t)\|_2.
\end{equation}
Since $\Phi(0)=0$, which leads to
\begin{equation}
\begin{aligned}
 \|v(t)\|_2\leq& \mu_{\text{max}}\|x(t)\|_2 \leq  \mu_{\text{max}}\|g_{u_e}(s)\|_{\mathcal{H}_\infty}\|u_e(t)\|_2\\
 +&\mu_{\text{max}}\|g_v(s)\|_{\mathcal{H}_\infty}\|v(t)\|_2.   
\end{aligned}
\end{equation}
If $\|g_v(s)\|_{\mathcal{H}_\infty}<\frac{1}{\mu_{\text{max}}}$, then
\begin{equation}\label{eq v}
  \|v(t)\|_2\leq \frac{\mu_{\text{max}}\|g_{u_e}(s)\|_{\mathcal{H}_\infty}}{1-\mu_{\text{max}}\|g_v(s)\|_{\mathcal{H}_\infty}} \|u_e(t)\|_2.
\end{equation}
According to (\ref{e_gue}), it has
\begin{multline}
    \label{e_gue1}
    \|g_{u_e}(s)-\hat{g}_{u_e}(s)\|_{\mathcal{H}_\infty}\|u(t)\|_2\\ \leq \gamma_H\|g_{u_e}(s)\|_{\mathcal{H}_\infty}\|u(t)\|_2.
\end{multline}
By (\ref{eq v}) and (\ref{e_gv}), 
\begin{equation}\label{err v1}
\begin{aligned}
  &\|g_v(s)-\hat{g}_v(s)\|_{\mathcal{H}_\infty}\|v(t)\|_2 \\
  &\leq   \frac{\gamma_H \mu_{\text{max}}\|g_{u_e}(s)\|_{\mathcal{H}_\infty}\|g_v(s)\|_{\mathcal{H}_\infty}}{1-\mu_{\text{max}}\|g_v(s)\|_{\mathcal{H}_\infty}} \|u_e(t)\|_2.  
\end{aligned}
\end{equation}
According to (\ref{eq err_v}) and (\ref{e_gv}), it can be obtained that
\begin{equation}\label{hatgv}
\begin{aligned}
   &\|\hat{g}_v(s)\|_{\mathcal{H}_\infty}\|v(t)-\hat{v}(t)\|_2\\
   &\leq\|g_v(s)-\hat{g}_v(s)\|_{\mathcal{H}_\infty}\|v(t)-\hat{v}(t)\|_2\\
   &\quad +\|g_v(s)\|_{\mathcal{H}_\infty}\|v(t)-\hat{v}(t)\|_2 \\
   &\leq (\gamma_H+1)\mu_{\text{max}} \|g_v(s)\|_{\mathcal{H}_\infty}\|x(t)-\hat{x}(t)\|_2.
\end{aligned}
\end{equation}
 Thus, by (\ref{e_gue1}), (\ref{err v1}) and (\ref{hatgv}), we obtain
\begin{multline}
    \|x(t)-\hat{x}(t)\|_2 \leq (\gamma_H+1)\mu_{\text{max}} \|g_v(s)\|_{\mathcal{H}_\infty}\|x(t)-\hat{x}(t)\|_2\\
      +   \gamma_H\|g_{u_e}(s)\|_{\mathcal{H}_\infty} \|u_e(t)\|_2 \\ +
  \frac{\gamma_H \mu_{\text{max}}\|g_{u_e}(s)\|_{\mathcal{H}_\infty}\|g_v(s)\|_{\mathcal{H}_\infty}}{1-\mu_{\text{max}}\|g_v(s)\|_{\mathcal{H}_\infty}}   \|u_e(t)\|_2 
 .
\end{multline}
If $\|{g}_v(s)\|_{\mathcal{H}_\infty}<\frac{1}{\mu_{\text{max}}(\gamma_H+1)}$, then $\|{g}_v(s)\|_{\mathcal{H}_\infty}<\frac{1}{\mu_\text{max}}$ also holds. The error bound is obtained as
\begin{multline*} 
   \|x(t)-\hat{x}(t)\|_2 \\
  \leq \frac{\gamma_H\|g_{u_e}(s)\|_{\mathcal{H}_\infty}+
  \frac{\gamma_H \mu_{\text{max}}\|g_{u_e}(s)\|_{\mathcal{H}_\infty}\|g_v(s)\|_{\mathcal{H}_\infty}}{1-\mu_{\text{max}}\|g_v(s)\|_{\mathcal{H}_\infty}}}{1-(\gamma_H+1)\mu_{\text{max}} \|g_v(s)\|_{\mathcal{H}_\infty}} \|u_e(t)\|_2,
\end{multline*}
which can be simplified  to the inequality \eqref{errorbound} with $\Gamma(\gamma_H)$ defined in \eqref{Gamma}.
\bibliographystyle{IEEEtran}
\bibliography{ref}
\end{document}